\documentclass[a4paper]{llncs}
 

\usepackage{algorithm}
\usepackage[noend]{algorithmic}
\usepackage{amsmath}
\usepackage{amssymb}
\usepackage{graphicx}
\usepackage{dsfont}
\usepackage{amsfonts}
\usepackage{setspace}
\usepackage[shortlabels]{enumitem}
\newtheorem{observation}[theorem]{Observation}

\makeatletter

\newcommand{\ALOOP}[1]{\ALC@it\algorithmicloop\ #1%
  \begin{ALC@loop}}
\newcommand{\ENDALOOP}{\end{ALC@loop}\ALC@it\algorithmicendloop}

\makeatother

\renewcommand{\epsilon}{\varepsilon}

\newcommand{\old}[1]{{}}
\newcommand{\versionA}[1]{{}}

\pagestyle{plain}

\bibliographystyle{plain}

\let\doendproof\endproof
\renewcommand\endproof{~\hfill\qed\doendproof}

\title{$\delta$-Greedy $t$-spanner
}
\author{Gali Bar-On \and Paz Carmi }
\institute{Department of Computer Science,\\ Ben-Gurion University of the Negev, Israel}




\begin{document}
\maketitle


\begin{abstract}
We introduce a new geometric spanner, $\delta$-\emph{Greedy}, whose construction is based on a generalization of the 
known \emph{Path-Greedy} and \emph{Gap-Greedy} spanners. 
The $\delta$-Greedy spanner combines the most desirable properties of geometric spanners both in theory and in practice.
More specifically, it has the same theoretical and practical properties as the Path-Greedy spanner: a natural definition, small degree, linear number of edges, low weight, and strong $(1+\varepsilon)$-spanner for every 
$\varepsilon>0$.  
The  $\delta$-Greedy algorithm is an improvement over the Path-Greedy algorithm with respect to the number of shortest path queries and hence with respect to its construction time. 
We show how to construct such a spanner for a set of $n$ points in the plane in $O(n^2 \log n)$ time. 

The  $\delta$-Greedy spanner has an additional parameter, $\delta$, which indicates how close it is to the Path-Greedy spanner on the account of the number of shortest path queries. 
For $\delta = t$ the output spanner is identical to the Path-Greedy spanner, 
while the number of shortest path queries is, in practice, linear. 

Finally, we show that for a set of $n$ points placed independently at random in a unit square the expected construction time of the $\delta$-Greedy algorithm is $O(n \log n)$. 
Our analysis indicates that the $\delta$-Greedy spanner gives the best results among the known 
spanners of expected $O(n \log n)$ time for random point sets. 
Moreover, the analysis implies that by setting $\delta = t$,  the $\delta$-Greedy algorithm provides a spanner identical to the Path-Greedy spanner in expected $O(n \log n)$ time.
\end{abstract}


\section{Introduction}\label{sec:Intro}
Given a set $P$ of points in the plane, a Euclidean $t$-spanner for $P$ is an undirected graph $G$, 
where  there is a $t$-spanning path in $G$ between any two points in $P$. 
A path between points $p$ and $q$  is a $t$-spanning path if its length is at most $t$ times the Euclidean distance 
between $p$ and $q$ (i.e., $t|pq|$).

The most known algorithm for computing $t$-spanner is probably the \emph{Path-Greedy} spanner.
Given a set $P$ of $n$ points in the plane, the Path-Greedy spanner algorithm creates a $t$-spanner for $P$ as follows. 
It starts with a graph $G$ having a vertex set $P$, an empty edge set $E$ and $ {n \choose 2} $ 
pairs of distinct points sorted in a non-decreasing order of their distances. Then, it adds an edge between $p$ and $q$ 
to the set $E$ if the length of the shortest path between $p$ and $q$ in $G$ is more than $t|pq|$, see Algorithm~\ref{alg:pathGreedy} for more details.
It has been shown in~\cite{Chandra,Chandra94,Das1,DasHN93,GudmundssonLN02,Soares1994} that for every set of points, 
the Path-Greedy spanner has $O(n)$ edges, a bounded degree and
total weight $O(wt(MST(P)))$, where $wt(MST(P))$ is the weight of a minimum spanning tree of $P$.   
The main weakness of the Path-Greedy algorithm is its time complexity -- the naive implementation of the Path-Greedy algorithm runs in near-cubic time. By performing $n \choose 2$ shortest path queries, where each query uses Dijkstra's shortest path algorithm, 
the time complexity of the entire algorithm reaches $O(n^3 \log n)$, where $n$ is the number of points in $P$.
Therefore,  researchers in this field have been trying to improve the Path-Greedy algorithm time complexity. 
For example, the \emph{Approximate-Greedy} algorithm generates a graph with the same theoretical properties as the Path-Greedy spanner in $O(n \log n)$ time~\cite{DBLP97,DBLP02}.
However, in practice there is no correlation between the expected and the unsatisfactory resulting spanner 
as shown in~\cite{DBLP07,FarshiG09}. Moreover, the algorithm is complicated and difficult to implement.

Another attempt to build a $t$-spanner more efficiently is introduced in~\cite{FarshiG05,DBLP07}.
This algorithm uses a matrix to store the length of the shortest path between every two points. 
For each pair of points, it first checks the matrix to see if there is a $t$-spanning path between these points. 
In case the entry in the matrix for this pair indicates that there is no $t$-spanning path, it performs a shortest 
path query and updates the matrix.
The authors in~\cite{DBLP07} have conjectured that the number of performed shortest path queries is linear.
This has been shown to be wrong in~\cite{BCFMS08}, as the number of shortest path queries may be quadratic. 
In addition, Bose et al.~\cite{BCFMS08} have shown how to compute the Path-Greedy spanner in $O(n^2\log n)$ time. 
The main idea of their algorithm is to compute a partial shortest path and then extend it when needed. 
However, the drawback of this algorithm is that it is complex and difficult to implement. 
In~\cite{AlewijnseBBB15}, Alewijnse et al. compute the Path-Greedy spanner using linear space in $O(n^2\log^2n)$ time by utilizing the 
Path-Greedy properties with respect to the Well Separated Pair Decomposition (WSPD).
In~\cite{Alewijnse2016}, Alewijnse et al. compute a $t$-spanner in $O(n \log^2 n\log^2\log n)$ expected time by using bucketing for short edges and by using WSPD for  long edges. Their algorithm is based on the  assumption that the Path-Greedy spanner consists of mostly short edges.

\begin{algorithm}[t]
\caption{Path-Greedy$(P,t)$}\label{alg:pathGreedy}
 \begin{algorithmic}[1]
    \REQUIRE A set $P$ of points in the plane and a constant $t > 1$ 
    \ENSURE A $t$-spanner $G(V,E)$ for $P$  
    \STATE sort the $n \choose 2$ pairs of distinct points in non-decreasing order of their distances and  
		       store them in list $L$ 
    \STATE $E \longleftarrow \emptyset$
		\FOR {$ (p,q) \in L$   \ \  consider pairs in increasing order }
		   \STATE $ \pi \longleftarrow$ length of the shortest path in $G$ between $p$ and $q$
			 \IF {$ \pi > t |pq|$} 
				   \STATE $E:=E\cup|pq|$
			\ENDIF
		\ENDFOR	 
	\RETURN $G=(P,E)$
\end{algorithmic}
\end{algorithm}  

Additional effort has been put in developing algorithms for computing $t$-spanner graphs, such as $\theta$-Graph algorithm~\cite{Clarkson87,Kei88}, Sink spanner, Skip-List spanner~\cite{AryaMS94}, and WSPD-based spanners~\cite{Callahan93,CallahanK92}. 
However, none of these algorithms produces a $t$-spanner as good as the Path-Greedy spanner in all aspects: size, weight and maximum degree, see~\cite{DBLP07,FarshiG09}. 

Therefore, our goal is to develop a simple and efficient algorithm  
that achieves both the theoretical and practical properties of the Path-Greedy spanner. 
In this paper we introduce the $\delta$-Greedy algorithm that constructs such a spanner for a set of $n$ points in the plane in $O(n^2 \log n)$ time.
Moreover, we show that for a set of $n$ points placed independently at random in a unit square the expected running time of the $\delta$-Greedy algorithm is $O(n \log n)$.

\section{ $\delta$-Greedy}\label{sec:delta-Greedy}
In this section we describe the $\delta$-Greedy algorithm (Section~\ref{sec:algDes}) for a given set $P$ of points in the plane, and two real numbers $t$ and $\delta$, such that  $1 < \delta \leq t$. Then, in Section~\ref{subSec:SR} we prove that the resulting graph is indeed a $t$-spanner with bounded degree.
Throughout this section we assume that $\delta < t$ (for example, $\delta = t^{\frac{4}{5}}$ or $\delta = \frac{1 + 4t}{5}$), except in Lemma~\ref{lemma:equal}, where we consider the case that $\delta=t$.
 
%
\subsection{Algorithm description}\label{sec:algDes}
For each point $p \in P$ we maintain a collection of cones $C_p$ with the property that for each point $q \in P$ that lies in $C_p$ there is a $t$-spanning path between $p$ and $q$ in the current graph.
The main idea of the $\delta$-Greedy algorithm is to ensure that two cones of a constant angle with apexes at $p$ and $q$ are added to $C_p$ and to $C_q$,  respectively, each time the algorithm runs a shortest path query between points $p$ and $q$.
%
%

The algorithm starts with a graph $G$ having a vertex set $P$, an empty edge set, and an initially empty
collection of cones $C_p$ for each point $p \in P$. 
The algorithm considers all pairs of distinct points of $P$ in a non-decreasing order of their distances.
If $p \in C_q$ or $q \in C_p$, then there is already a $t$-spanning path that connects $p$ and $q$ in $G$, and
there is no need to check this pair.
Otherwise, let $d$ be the length of the shortest path that connects $p$ and $q$ in $G$ divided by $|pq|$.
Let $c_p(\theta,q)$ denote the cone with apex at $p$ of angle $\theta$, such that the ray  $\stackrel{\rightarrow}{pq}$ is its bisector.
The decision whether to add the edge $(p, q)$ to the edge set of $G$ is made according to the value of $d$.
If $d > \delta$, then we add the edge $(p,q)$ to $G$, a cone $c_p (2 \theta,q)$ to $C_p$, and a cone $c_q (2 \theta,p)$ to $C_q$, 
where $\theta =  \frac{\Pi}{4} - \arcsin(\frac{1}{\sqrt 2 \cdot t})$.
If $d  \leq \delta$, then we do not add this edge to $G$, however, 
we add a cone $c_p (2 \theta,q)$ to $C_p$ and a cone $c_q (2 \theta,p)$ to $C_q$, where 
$\theta =  \frac{\Pi}{4} - \arcsin(\frac{d}{\sqrt 2 \cdot t})$.

\begin{figure}[b] 
    \centering
        \includegraphics[width=0.8\textwidth]{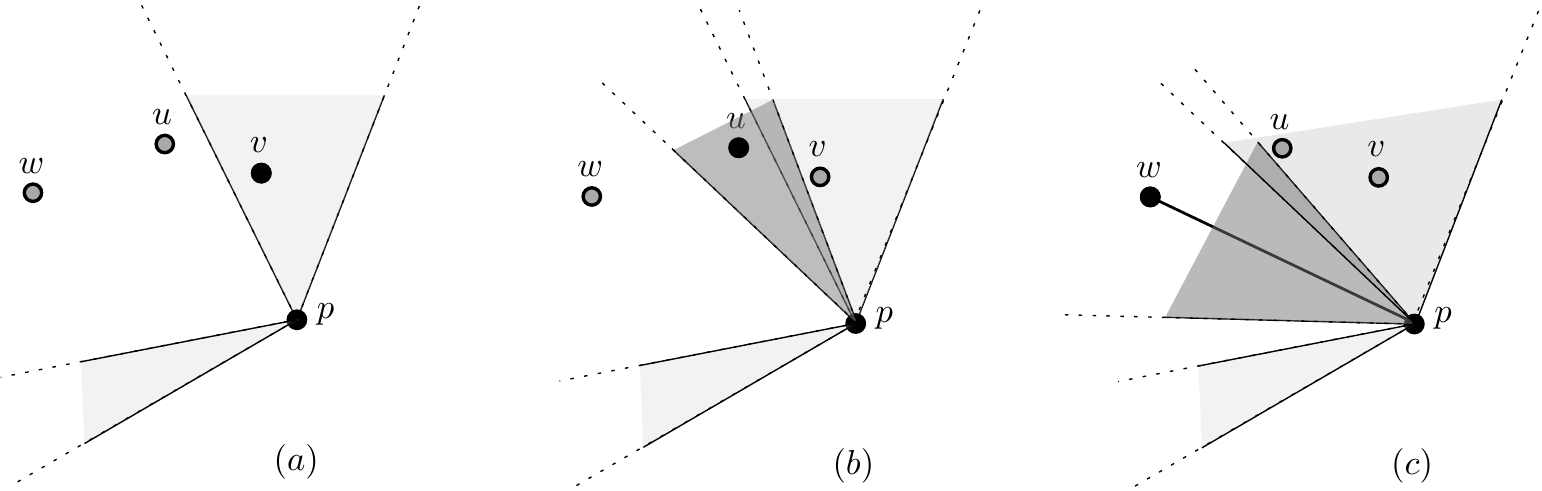}
    \caption{The three scenarios of the $\delta$-Greedy algorithm. (a) $v \in C_p$; (b) $u \notin C_p$ and $d \leq \delta$; 
		           (c)~$w \notin C_p$ and $d > \delta$. }
    \label{fig:C_p}
\end{figure}

In Algorithm~\ref{alg:deltaGreedy}, we give the pseudo-code description of the $\delta$-Greedy algorithm.
In Figure~\ref{fig:C_p}, we illustrate a cone collection $C_p$ of a point $p$ and how it is modified during the three scenarios of the algorithm.  The figure contains the point $p$, its collection $C_p$ colored in gray, and three points $v$, $u$, and $w$, such
that $|pv| < |pu| < |pw|$. Point $v$ lies in $C_p$ representing the first case, where the algorithm does not change the spanner and proceeds to the next pair without performing a shortest path query. The algorithm runs a shortest path query between $p$ and $u$, 
since $u \notin C_p$  (for the purpose of illustration assume $p \notin C_u$).  
Figure~\ref{fig:C_p}(b) describes the second case of the algorithm, where the length of the shortest path between $p$ and $u$ is at most 
$\delta|pu|$. In this case the algorithm adds a cone to $C_p$ without updating the spanner.
Figure~\ref{fig:C_p}(c) describes the third case of the algorithm, where the length of the shortest path between $p$ and $w$ is more than 
$\delta|pw|$. In this case the algorithm adds a cone to $C_p$ and the edge $(p,w)$ to the spanner.


\begin{algorithm}
\caption{$\delta$-Greedy}\label{alg:deltaGreedy}
  \begin{algorithmic}[1] 
    \REQUIRE A set $P$ of points in the plane and two real numbers $t$ and $\delta$ s.t. $1 < \delta \leq t$
    \ENSURE A $t$-spanner for $P$
	\STATE sort the $n \choose 2$ pairs of distinct points in non-decreasing order of their distances 
	  (breaking ties arbitrarily) and store them in list $L$ 
   \STATE $E \longleftarrow \emptyset$  \hspace{6.4cm}  
	                                 /* E is the edge set */
	\STATE $C_p \longleftarrow \emptyset \ \  \forall p \in P$ \hspace{2.48cm} 
	                                /* $C_p$ is set of cones with apex at $p$ */ 
	\STATE $G \longleftarrow (P,E)$	 \hspace{4.01cm} 
	                             /* G  is the resulting $t$-spanner  */
		\FOR{$(p,q) \in L$  \ \  consider pairs in increasing order}\label{alg:edgeIteration}
	 \IF {$(p \notin C_q)$  and $(q \notin C_p)$}  
			 \STATE\label{Alg:shortestPath} $ d \longleftarrow $ 
			length of the shortest path in $G$ between $p$ and $q$ divided $|pq|$
			\IF {$d > \delta $ } 
				\STATE\label{StepAddingEdges} $E \longleftarrow E \cup \{ (p,q) \}$
				\STATE $ d \longleftarrow 1 $
			\ENDIF
			\STATE  $\theta  \longleftarrow \frac{\Pi}{4} - \arcsin(\frac{d}{\sqrt 2 \cdot t})$  \hspace{3.84cm}  
			                      /* $\frac{1}{\cos \theta - \sin \theta} =  \frac{t}{d}$ */
			\STATE\label{alg:addConesP} $c_p (2 \theta,q)  \longleftarrow$ cone of angle  $2 \theta$ with apex at $p$
			and bisector  $ \stackrel{\rightarrow}{pq}$ 
      \STATE $c_q (2\theta ,p) \longleftarrow$ cone of angle  $2 \theta$ with apex at $q$ and
			 bisector $ \stackrel{\rightarrow}{qp}$  
 
			\STATE $C_p \longleftarrow C_p \cup c_p (2 \theta,q) $
			\STATE $C_q \longleftarrow C_q \cup c_q (2 \theta,p)$			
	\ENDIF		  	
\ENDFOR	
		\RETURN $G=(P,E)$
	\end{algorithmic}
	\end{algorithm}

\subsection{Algorithm analysis}\label{subSec:SR}
In this section we analyze several properties of the $\delta$-Greedy algorithm, 
including the spanning ratio and the degree of the resulting graph. 

The following lemma is a generalization of Lemma~6.4.1. in~\cite{GiriSmid07}.
\begin{lemma}\label{lemma:theta}
Let $t$ and $\delta$ be real numbers, such that $1  \leq \delta \leq t$. 
Let $p$, $q$, and $r$ be points in the plane, such that
\begin{enumerate}
	\item $p \neq r$,
	\item $ |pr| \leq |pq|$,
	\item  $\frac {1} {\cos \theta - \sin \theta} \leq \frac{t}{\delta}$, where $\theta$ is the angle $\angle rpq$
	       \ $($i.e., $\angle rpq = \theta \leq  \frac{\Pi}{4} - \arcsin(\frac{\delta}{\sqrt 2 \cdot t}) )$.
\end{enumerate}
\vspace{0.2cm}
Then  $\delta|pr|+ t|rq| \leq t|pq|$.
\end{lemma}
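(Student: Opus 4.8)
The plan is to reduce the geometric claim to a one-variable trigonometric inequality via direct coordinates and then exploit two monotonicity observations. First I would place $p$ at the origin with $q$ on the positive $x$-axis, so that $r$ sits at angle $\theta = \angle rpq$ from $\overrightarrow{pq}$; writing $a = |pr|$ and $b = |pq|$ (so $0 < a \le b$ by hypotheses (1) and (2)), the law of cosines gives $|rq|^2 = a^2 + b^2 - 2ab\cos\theta$. The desired inequality $\delta|pr| + t|rq| \le t|pq|$ is equivalent to $|rq| \le b - \frac{\delta}{t}a$. Since $\delta \le t$ forces $\frac{\delta}{t} \le 1$, the right-hand side is at least $b - a \ge 0$, so I may square both sides. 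After substituting the law-of-cosines expression, cancelling the common $b^2$ term, and dividing out a factor of $a > 0$ (legitimate since $p \ne r$), the claim collapses to $a\,(1 - s^2) \le 2b\,(\cos\theta - s)$, where I abbreviate $s := \delta/t \in (0,1]$.

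Next I would unpack hypothesis (3): the condition $\frac{1}{\cos\theta - \sin\theta} \le \frac{t}{\delta}$ is exactly $\cos\theta - \sin\theta \ge s$, hence $\cos\theta - s \ge \sin\theta \ge 0$ (here $\theta \in [0,\frac{\pi}{4}]$, and note $\cos\theta - \sin\theta = \sqrt{2}\sin(\frac{\pi}{4}-\theta)$ matches the stated angle bound). Two monotonicity steps then finish the proof. Because the right-hand side $2b(\cos\theta - s)$ is nonnegative and increasing in $b$ while the left-hand side is independent of $b$, the inequality is hardest when $b$ is smallest, i.e.\ $b = a$; dividing by $a$ it suffices to prove $1 + 2s - s^2 \le 2\cos\theta$. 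The left-hand side is increasing in $s$ on $[0,1]$, and $s \le \cos\theta - \sin\theta$, so it is enough to verify the inequality at $s = \cos\theta - \sin\theta$. Substituting this value and using $(\cos\theta - \sin\theta)^2 = 1 - 2\sin\theta\cos\theta$, the inequality simplifies to $2\sin\theta(\cos\theta - 1) \le 0$, which holds trivially since $\sin\theta \ge 0$ and $\cos\theta \le 1$.

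The only real care needed is in justifying the two reductions: I must check the sign conditions ($1 - s^2 \ge 0$ and $\cos\theta - s \ge 0$) that validate the monotonicity arguments, and confirm that squaring is reversible, both of which follow from $\delta \le t$ together with the angle bound. I expect the bookkeeping in the algebraic simplification — tracking which terms cancel after squaring — to be the most error-prone part rather than conceptually hard. It is worth noting that the inequality is tight at $\theta = 0$ with $s = 1$ (equivalently $r$ on segment $pq$), which is consistent with the extremal role that the threshold $\theta = \frac{\pi}{4} - \arcsin\!\big(\frac{\delta}{\sqrt{2}\,t}\big)$ plays in the algorithm.
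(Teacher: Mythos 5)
Your proof is correct, but it takes a genuinely different route from the paper's. The paper uses the orthogonal projection $r'$ of $r$ onto the segment $\overline{pq}$: by the triangle inequality, $|rq| \leq |rr'| + |r'q| = |pq| - |pr|(\cos\theta - \sin\theta)$, and then a three-line chain of inequalities, using hypothesis (3) in the form $\delta \leq t(\cos\theta - \sin\theta)$, finishes the argument with no squaring and no case analysis. You instead work with the exact value of $|rq|$ via the law of cosines, square (after verifying the sign conditions $b - sa \geq 0$ and $\cos\theta - s \geq \sin\theta \geq 0$, which you correctly flag as essential), and reduce to the one-variable inequality $1 + 2s - s^2 \leq 2\cos\theta$ by two monotonicity steps, settled at the extremal value $s = \cos\theta - \sin\theta$ by the identity $(\cos\theta-\sin\theta)^2 = 1 - 2\sin\theta\cos\theta$. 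I checked your algebra: the reduction to $a(1-s^2) \leq 2b(\cos\theta - s)$, the restriction to $b = a$, and the final collapse to $2\sin\theta(\cos\theta - 1) \leq 0$ are all right. The paper's projection argument buys brevity and transparency --- it makes visible why $\cos\theta - \sin\theta$ is the relevant quantity (progress along $\overline{pq}$ minus the detour), and it is the natural generalization of Lemma~6.4.1 of Narasimhan--Smid that the authors cite. Your argument buys elementarity and sharpness: it needs no geometric insight beyond the law of cosines, it identifies exactly where the inequality is tight ($\theta = 0$, $s = 1$), and it handles $|rq|$ exactly rather than through the triangle-inequality upper bound; the price is the longer bookkeeping and the sign conditions that the squaring and monotonicity reductions require.
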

\begin{proof}
Let $r'$ be the orthogonal projection of $r$ onto segment $\overline{pq}$. 
Then, $|rr'| = |pr| \sin \theta$, $|pr'| = |pr| \cos \theta$, and $|r'q| = |pq| - |pr'|$. Thus,
$|r'q| = |pq| - |pr| \cos \theta$.  
By triangle inequality  
\begin{align*}  
  |rq| & \leq  |rr'| + |r'q| \\
	     & \leq  |pr| \sin \theta + |pq| - |pr| \cos \theta \\
			 & =    |pq| - |pr|( \cos \theta - \sin \theta).
\end{align*}
\begin{eqnarray*}  
 \text{We have, \ }  \delta|pr|+ t|rq| &\leq& \delta |pr|  + t (|pq| - |pr|( \cos \theta - \sin \theta) )  \\
                       &=&    t|pq| -  t|pr| ( \cos \theta - \sin \theta) + \delta |pr| \\  
                       &\leq& t|pq| -  t|pr| ( \cos \theta - \sin \theta) + t ( \cos \theta - \sin \theta)  |pr| \\
											&\leq& t|pq|.
\end{eqnarray*}

\old{
We have 
\begin{eqnarray*}  
  \delta|pr|+ t|rq| &\leq& \delta |pr|  + t (|pq| - |pr|( \cos \theta - \sin \theta) )  \\
                       &=&    t|pq| -  t|pr| ( \cos \theta - \sin \theta) + \delta |pr| \\  
                       &\leq& t|pq| -  t|pr| ( \cos \theta - \sin \theta) + t ( \cos \theta - \sin \theta)  |pr| \\
											&\leq& t|pq|.
\end{eqnarray*}
} 
\end{proof}
\begin{lemma}\label{lemma:shortest-path}
The number of shortest path queries performed by $\delta$-Greedy algorithm 
for each point  is $O(\frac{1}{t/\delta -1})$.
\end{lemma}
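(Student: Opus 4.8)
The plan is to bound the number of shortest path queries performed \emph{from a single point} $p$ by showing that each such query ``uses up'' a cone of angular width at least $2\theta_{\min}$ around $p$, where $\theta_{\min}$ is a positive constant determined by $t$ and $\delta$. First I would recall the invariant maintained by the algorithm: whenever a shortest path query is run on a pair $(p,q)$, the algorithm adds a cone $c_p(2\theta,q)$ to $C_p$, and by the construction the angle satisfies $\theta = \frac{\Pi}{4} - \arcsin\!\left(\frac{d}{\sqrt{2}\cdot t}\right)$ with $d \le \delta$ (since when $d>\delta$ we reset $d\leftarrow 1$, and $1\le\delta$, so in every case the value of $d$ used to define the cone is at most $\delta$). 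Because $\arcsin$ is increasing, this means $\theta \ge \frac{\Pi}{4} - \arcsin\!\left(\frac{\delta}{\sqrt{2}\cdot t}\right) =: \theta_{\min} > 0$ whenever $\delta < t$. Thus every query from $p$ contributes a cone of half-angle at least $\theta_{\min}$, i.e.\ angular width at least $2\theta_{\min}$, to $C_p$.

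The key step is to argue that the cones added to $C_p$ are, in the relevant angular sense, disjoint, so that their total angular measure is at most $2\pi$ and hence the number of queries from $p$ is at most $\pi/\theta_{\min}$. I would prove disjointness using the processing order and Lemma~\ref{lemma:theta}. Suppose $q$ is queried from $p$, creating cone $c_p(2\theta,q)$, and later some $q'$ with $|pq'|\ge|pq|$ has its ray $\stackrel{\rightarrow}{pq'}$ falling inside the cone $c_p(2\theta,q)$; then the angle $\angle q p q' \le \theta$, and since the cone was built with $\frac{1}{\cos\theta-\sin\theta}=\frac{t}{d}$ where $d$ was the (scaled) shortest-path length realized between $p$ and $q$, Lemma~\ref{lemma:theta} applied to the triple $(p,q,q')$ gives $d|pq|+t|qq'|\le t|pq'|$. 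This shows the existing spanning structure from $p$ to $q$ composes with a $t$-spanning path from $q$ to $q'$ to yield a $t$-spanning path from $p$ to $q'$, so $q' \in C_p$ already and no query is performed on $(p,q')$. Consequently two queried directions from $p$ cannot lie within angle $\theta_{\min}$ of one another, which forces the queried rays to be angularly separated by at least $\theta_{\min}$.

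With angular separation at least $\theta_{\min}$ among the at most one query per ``sector,'' a simple packing argument around the full $2\pi$ of directions at $p$ bounds the number of queries from $p$ by $O(1/\theta_{\min})$. The final step is to translate $1/\theta_{\min}$ into the stated bound $O\!\left(\frac{1}{t/\delta - 1}\right)$. I would do this by estimating $\theta_{\min} = \frac{\Pi}{4}-\arcsin\!\left(\frac{\delta}{\sqrt 2\, t}\right)$ from below: writing $x = \delta/t \in (0,1]$, one has $\theta_{\min}=\frac{\Pi}{4}-\arcsin(x/\sqrt2)$, and a Taylor/mean-value estimate of $\arcsin$ near $x=1$ shows $\theta_{\min} = \Theta(1-x) = \Theta(1-\delta/t)$, equivalently $\theta_{\min}=\Theta\!\left(\frac{t/\delta-1}{t/\delta}\right)$. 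Since $t/\delta$ is bounded (between $1$ and $t/\delta_{\max}$ of order a constant once $t$ is fixed), this gives $1/\theta_{\min}=O\!\left(\frac{1}{t/\delta-1}\right)$, as claimed.

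The main obstacle I expect is making the disjointness argument fully rigorous: one must be careful that the relevant spanning guarantee attached to a cone $c_p(2\theta,q)$ is exactly the $\frac{t}{d}$-spanning quality needed so that Lemma~\ref{lemma:theta}'s hypothesis $\frac{1}{\cos\theta-\sin\theta}\le\frac{t}{d}$ is met for \emph{later} points landing in that cone, and that the monotone processing-by-distance order guarantees $|pq|\le|pq'|$ whenever $q'$ is processed after $q$. Handling the $d>\delta$ case (where $d$ is reset to $1$ and an edge is added) uniformly with the $d\le\delta$ case, and confirming the half-angle lower bound $\theta_{\min}$ holds in both cases, is the delicate bookkeeping that underpins the whole count.
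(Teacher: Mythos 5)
Your proposal is correct and follows essentially the same route as the paper's proof: every query from $p$ installs a cone of half-angle at least $\frac{\Pi}{4}-\arcsin\bigl(\frac{\delta}{\sqrt{2}\,t}\bigr)$ in $C_p$, later queried directions must avoid all existing cones (forcing angular separation), a packing argument around $2\pi$ bounds the count, and the asymptotic estimate $\theta_{\min}=\Theta(t/\delta-1)$ yields the stated bound. The only cosmetic difference is that your disjointness step invokes Lemma~\ref{lemma:theta}, which is unnecessary for the counting (the algorithm skips $(p,q')$ purely on the geometric test $q'\in C_p$, so separation follows directly from the skip condition); the lemma is only needed for the spanner-correctness proof, not here.
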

\begin{proof}
Clearly, the number of shortest path queries performed 
for each point is at most $n-1$. Thus, we may assume that $t/\delta > 1 + 1/n$.
Consider a point $p\in P$ and let $(p,q)$ and $(p,r)$ be two pairs of points that $\delta$-Greedy algorithm has run 
shortest path queries for.
Assume w.l.o.g. that the pair $(p,r)$ has been considered before the pair $(p,q)$, i.e., $|rp| \leq |pq|$.
Let $d$ be the length of the path computed by the shortest path query for $(p,r)$ divide by $|pr|$.
If $d \leq \delta$, then the cone added to the collection $C_p$ has an angle of at least  
$\frac{\Pi}{4} - \arcsin(\frac{\delta}{\sqrt 2 \cdot t})$.
Otherwise,  the algorithm adds the edge $(p,r)$ to $G$ and a new cone to the collection of cones $C_p$, 
where the angle of this cone is $\frac{\Pi}{4} - \arcsin(\frac{1}{\sqrt 2 \cdot t})$. 
Thus, after the shortest path query performed for the pair $(p,r)$, the collection $C_p$ contains a cone $c_p (\theta, r)$, 
where $\theta$ is at least $\frac{\Pi}{2} - 2\arcsin(\frac{\delta}{\sqrt 2 \cdot t})$.
The $\delta$-Greedy algorithm performs a shortest path query for $(p,q)$ only if $p \notin C_q$ and $q \notin C_p$. 
Thus, the angle $\angle rpq$ is at least  $\frac{\Pi}{4} - \arcsin(\frac{\delta}{\sqrt 2 \cdot t})$, and we 
have at most $k= \frac{2 \pi}{\theta}$ shortest path queries for a point.

Let us consider the case where $t>1$ and $\frac{t}{\delta}  \rightarrow 1$.
The equation  $\theta = \frac{\Pi}{4} - \arcsin(\frac{\delta}{\sqrt 2 \cdot t})$ implies that 
$\frac {1} {\cos \theta - \sin \theta} = \frac{t}{\delta}$.
Then, we have   $$\theta \rightarrow 0 , \ \frac{t}{\delta} \sim  1 + \theta, \ \text{and} \ \theta \sim \frac{t}{\delta} -1.$$
Thus, we have  $k \sim \frac{2\pi}{ \frac{t}{\delta} -1} = O(\frac{1}{t / \delta -1})$.			
 \end{proof}

\begin{observation}
 For $\delta = t^{\frac{x-1}{x}} $, where $x>1$ is a fixed integer, 
the number of shortest path queries performed by $\delta$-Greedy algorithm for each point  is $O(\frac{x}{t -1})$.
\end {observation}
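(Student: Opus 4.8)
The plan is to reduce the claim to the bound already proved in Lemma~\ref{lemma:shortest-path}, which says that each point incurs $O(\frac{1}{t/\delta - 1})$ shortest path queries. First I would substitute the prescribed value $\delta = t^{\frac{x-1}{x}}$ and simplify the ratio: since $t/\delta = t^{1 - \frac{x-1}{x}} = t^{1/x}$, the per-point query bound becomes $O\!\left(\frac{1}{t^{1/x} - 1}\right)$. Thus the entire statement reduces to showing that $\frac{1}{t^{1/x}-1} = O\!\left(\frac{x}{t-1}\right)$, i.e. to a purely algebraic estimate relating $t^{1/x}-1$ to $t-1$.

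The key tool I would use is the factorization of a difference of $x$-th powers. Writing $u = t^{1/x}$ so that $t = u^x$, we have $t - 1 = u^x - 1 = (u-1)\sum_{i=0}^{x-1} u^{i}$, that is,
\[
t - 1 = \left(t^{1/x} - 1\right)\sum_{i=0}^{x-1} t^{i/x}.
\]
Since $t > 1$, each of the $x$ summands satisfies $1 \le t^{i/x} \le t^{(x-1)/x}$, so the sum lies between $x$ and $x\,t^{(x-1)/x}$. Using the upper estimate of the sum I would then deduce
\[
t^{1/x} - 1 = \frac{t-1}{\sum_{i=0}^{x-1} t^{i/x}} \ge \frac{t-1}{x\,t^{(x-1)/x}},
\]
and therefore $\frac{1}{t^{1/x}-1} \le \frac{x\,t^{(x-1)/x}}{t-1}$.

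It remains only to absorb the factor $t^{(x-1)/x}$ into the hidden constant. Because $(x-1)/x < 1$ and $t$ is a fixed spanner parameter in the relevant regime $t \to 1$ (in particular bounded, say $t \le 2$), we have $t^{(x-1)/x} \le t = O(1)$, so $\frac{x\,t^{(x-1)/x}}{t-1} = O\!\left(\frac{x}{t-1}\right)$, which closes the argument. I expect the only real subtlety to lie in this last step: one must make explicit that $t$ is being treated as a bounded constant, since otherwise the residual $t^{(x-1)/x}$ factor would survive in the bound. The factorization itself and the two-sided estimate $x \le \sum_{i=0}^{x-1} t^{i/x} \le x\,t^{(x-1)/x}$ are otherwise entirely routine.
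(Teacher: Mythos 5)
Your proof is correct, but it follows a different route from the paper's. The paper does not invoke Lemma~\ref{lemma:shortest-path} as a black box; instead it re-runs that lemma's asymptotic angle analysis with $\delta = t^{\frac{x-1}{x}}$ substituted in: in the regime $t/\delta \to 1$ one has $\theta \to 0$ and $t/\delta \sim 1+\theta$ (from the defining relation $\frac{1}{\cos\theta-\sin\theta} = \frac{t}{\delta}$), hence $t^{1/x} \sim 1+\theta$, so $t \sim (1+\theta)^x \sim 1 + x\theta$ by the first-order binomial approximation, giving $\theta \sim \frac{t-1}{x}$ and $k \sim \frac{2\pi}{\theta} = \frac{2\pi x}{t-1}$. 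You instead take the lemma's stated conclusion $O\bigl(\frac{1}{t/\delta-1}\bigr)$ at face value, compute $t/\delta = t^{1/x}$, and settle the purely algebraic question via the exact factorization $t-1 = \bigl(t^{1/x}-1\bigr)\sum_{i=0}^{x-1} t^{i/x}$. What your approach buys is a non-asymptotic inequality with explicit constants, valid for every fixed $t>1$, replacing the paper's informal chain of $\sim$ equivalences; it also correctly isolates the one genuine subtlety, namely that the residual factor $t^{(x-1)/x}$ can only be absorbed when $t$ is treated as bounded. The paper's limit argument has the same implicit restriction (its regime $t/\delta \to 1$ means $t \to 1$ for fixed $x$) but never states it; on the other hand, it retains the geometric meaning of the bound (the cone angle $\theta$) and the explicit constant $2\pi$. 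Both arguments ultimately rest on the same fact, $t^{1/x}-1 = \Theta\bigl(\frac{t-1}{x}\bigr)$ for $t$ near $1$, reached once by Taylor expansion and once by factoring a difference of $x$-th powers.
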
 
\begin{proof}
\old{
As in Lemma~\ref{lemma:shortest-path}, let us consider the case where $t>1$ and $\frac{t}{\delta}  \rightarrow 1$.
Then, we have   $$\theta \rightarrow 0, \ \frac{t}{\delta} \sim  1 + \theta, \ 
                   \frac{t}{t^{(\frac{x-1}{x})}} \sim  1 + \theta, \ t^{( \frac{1}{x})}  \sim  1 + \theta, \ 
									t \sim (1+\theta)^x, \
									  t \sim  1 + x \cdot \theta,	\ \text{and}  \ \theta \sim \frac{t- 1}{x}.$$
}  
As in Lemma~\ref{lemma:shortest-path}, let us consider the case where $t>1$ and $\frac{t}{\delta}  \rightarrow 1$.
Then, we have   
$$\theta \rightarrow 0, \ \ \frac{t}{\delta} \sim  1 + \theta, \ \ 
\frac{t}{t^{(\frac{x-1}{x})}} \sim  1 + \theta,  \ \ t^{( \frac{1}{x})}  \sim  1 + \theta, $$	
$$ t \sim (1+\theta)^x,  \ \   t \sim  1 + x \cdot \theta,	\ \text{and}  \ \theta \sim \frac{t- 1}{x}.$$
Thus, we have  $k \sim \frac{2\pi x}{ t -1} = O(\frac{x}{t  -1})$.			

\end{proof}

\begin{observation} 
The running time of $\delta$-Greedy algorithm is $O(\frac{n^2 \log n}{(t/\delta -1)^2})$.
\end {observation}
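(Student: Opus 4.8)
The plan is to account separately for the three sources of work in Algorithm~\ref{alg:deltaGreedy}: the initial sorting of the $\binom{n}{2}$ pairs, the cone-membership tests performed for every pair, and the shortest-path (Dijkstra) queries invoked in line~\ref{Alg:shortestPath}. Sorting the pairs by distance is done once and costs $O(n^2\log n)$, contributing an additive $O(n^2\log n)$ term.

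Next I would bound the two quantities that govern the cost of the queries, both via Lemma~\ref{lemma:shortest-path}. First, since each point participates in $O(1/(t/\delta-1))$ queries and every query involves two points, the total number of shortest-path queries is $O(n/(t/\delta-1))$. Second, I would bound the size of the working graph $G$ at any stage of the execution: a query for a pair $(p,q)$ adds at most one edge, and it adds that edge only in the case $d>\delta$, together with a cone placed in $C_p$ and in $C_q$. Hence the number of edges incident to any vertex $p$ is at most the number of queries involving $p$, which is $O(1/(t/\delta-1))$. Summing over all vertices gives $|E| = O(n/(t/\delta-1))$ throughout.

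With these two bounds in hand, each Dijkstra query runs on a graph with $n$ vertices and $O(n/(t/\delta-1))$ edges, and therefore takes $O((|E|+n)\log n) = O\!\left(\frac{n\log n}{t/\delta-1}\right)$ time using a standard heap. Multiplying the per-query cost by the number of queries yields $O\!\left(\frac{n}{t/\delta-1}\right)\cdot O\!\left(\frac{n\log n}{t/\delta-1}\right) = O\!\left(\frac{n^2\log n}{(t/\delta-1)^2}\right)$, which is exactly where the squared factor originates. The cone-membership tests are cheaper: for each of the $\binom{n}{2}$ pairs we scan the $O(1/(t/\delta-1))$ cones of the relevant collection, for a total of $O(n^2/(t/\delta-1))$, which is dominated by the query cost. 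Adding the three contributions, and observing that in the regime of interest ($\delta\to t$, i.e. $t/\delta-1<1$) the query term dominates the sorting term, gives the claimed $O\!\left(\frac{n^2\log n}{(t/\delta-1)^2}\right)$ bound.

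The step I expect to require the most care is obtaining the square rather than a single power of $1/(t/\delta-1)$: it is tempting to charge only the \emph{number} of queries against this factor, but the sharper bound requires simultaneously controlling the number of edges of $G$ (equivalently, the maximum degree of the spanner) by the same factor, so that each individual Dijkstra call is itself cheaper by a factor of $1/(t/\delta-1)$. A secondary point to verify is that the cone collections $C_p$ support membership tests and insertions in $O(1/(t/\delta-1))$ time each, so that this bookkeeping does not dominate the stated running time.
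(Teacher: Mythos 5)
Your proof is correct and follows essentially the same route as the paper's: sorting cost plus (number of queries, bounded via Lemma~\ref{lemma:shortest-path}) times (Dijkstra cost on a graph with $O(\frac{n}{t/\delta-1})$ edges). You are in fact more careful than the paper, which asserts the edge bound without your justification (each edge incident to $p$ is charged to a query involving $p$) and ignores the cone-test and sorting bookkeeping, but these additions refine rather than change the argument.
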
 
\begin{proof}
First, the algorithm sorts the $n \choose 2$ pairs of distinct points in non-decreasing order of their distances, 
this takes $O(n^2 \log n)$ time.
A shortest path query is done by Dijkstra's shortest path algorithm on a graph with $O(\frac{n}{t/\delta -1})$ edges and takes 
$O(\frac{n}{t/\delta -1} + n \log n)$ time. 
By Lemma~\ref{lemma:shortest-path} each point performs $O(\frac{1}{t/\delta -1})$ shortest path queries. 
Therefore, we have that the running time of $\delta$-Greedy algorithm is $O( (\frac{n}{t/\delta -1})^2 \log n )$.
\end{proof}

\begin{observation} 
\label{lemma:cone}
The number of cones that each point has in its collection along the algorithm is constant depending on $t$ and 
$\delta$ ($O(\frac{1}{t/\delta -1})$).
\end{observation}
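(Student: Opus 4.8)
The plan is to reduce the Observation directly to Lemma~\ref{lemma:shortest-path} by establishing a clean accounting of when cones are created. First I would inspect Algorithm~\ref{alg:deltaGreedy} to pin down exactly when a cone enters a collection $C_p$: cones are inserted only inside the block guarded by the test $p \notin C_q$ and $q \notin C_p$, and this is precisely the condition under which a shortest path query is executed (line~\ref{Alg:shortestPath}). Hence no cone is ever added to any collection except as a side effect of a shortest path query.

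Next I would make the correspondence between cones and queries precise. Each time the algorithm processes a pair and runs a query, it deposits exactly one cone into each of the two collections belonging to the endpoints of that pair, namely a cone $c_p(2\theta,q)$ into $C_p$ and a cone $c_q(2\theta,p)$ into $C_q$. Consequently, every query involving $p$ (whether $p$ appears as the first or the second element of the processed pair) contributes exactly one cone to $C_p$, and conversely every cone in $C_p$ originates from such a query. This yields a bijection between the cones of $C_p$ and the shortest path queries involving $p$, so the number of cones in $C_p$ is exactly the number of such queries.

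Finally I would invoke Lemma~\ref{lemma:shortest-path}, which bounds the number of shortest path queries performed for each point by $O(\frac{1}{t/\delta - 1})$. Combining this bound with the bijection above gives $|C_p| = O(\frac{1}{t/\delta - 1})$ for every point $p$, which is the claimed constant (depending on $t$ and $\delta$). I do not expect a genuine obstacle here, as the statement is essentially a bookkeeping corollary of Lemma~\ref{lemma:shortest-path}; the only point requiring care is to confirm that the query count in that lemma tallies every pair in which $p$ participates, not merely those in which $p$ is listed first, so that the count matches the number of cones actually deposited into $C_p$.
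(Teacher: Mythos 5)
Your proposal is correct and follows essentially the same route as the paper: both arguments observe that cones are added to $C_p$ only as a consequence of a shortest path query involving $p$, and then invoke Lemma~\ref{lemma:shortest-path} to bound the number of such queries, hence the number of cones, by $O(\frac{1}{t/\delta-1})$. Your extra care about counting pairs where $p$ appears as either endpoint is a harmless refinement of the same bookkeeping.
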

\begin{proof}
As shown in Lemma~\ref{lemma:shortest-path}, the number of shortest path queries for each point is  $O(\frac{1}{t/\delta -1})$. 
The subsequent step of a shortest path query is the addition of two cones, meaning  that for each point $p$ the number of cones in the collection of cones $C_p$ is $O(\frac{1}{t/\delta -1})$.
\end{proof}

\begin{corollary} 
The additional space for each point $p$ for the collection $C_p$ is constant.
\end {corollary}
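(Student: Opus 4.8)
The plan is to obtain this as an immediate corollary of Observation~\ref{lemma:cone}. First I would verify that each cone $c_p(2\theta,q)$ stored in the collection $C_p$ admits an $O(1)$-size representation: it suffices to record the apex $p$ (or a pointer to it), the half-angle $\theta$, and the direction of the bisector $\stackrel{\rightarrow}{pq}$, or equivalently the two bounding ray directions of the cone. Nothing further is required, since deciding whether a query point falls inside a stored cone reduces to a constant-time angular comparison against these two directions. Hence the space charged to a single cone is constant and, crucially, independent of $n$.

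Next I would invoke Observation~\ref{lemma:cone} directly, which guarantees that the total number of cones ever placed in $C_p$ over the entire execution is $O(\frac{1}{t/\delta - 1})$, a quantity depending only on the fixed parameters $t$ and $\delta$. Since the algorithm only ever inserts cones into $C_p$ and never removes them, the size of $C_p$ is monotone in the number of shortest path queries performed for $p$ and is therefore bounded by this same constant at every step of the algorithm, not merely at termination. Multiplying the constant number of cones by the constant space per cone yields additional space $O(\frac{1}{t/\delta - 1}) = O(1)$ for the collection $C_p$ of each point $p$.

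There is essentially no obstacle here; the only point worth checking is that the per-cone encoding is genuinely of size independent of $n$, which holds because a cone is fully specified by its apex and two ray directions. I would therefore present this corollary as a one-line consequence of Observation~\ref{lemma:cone} together with the constant-size cone representation, optionally remarking that summing over all $p \in P$ gives $O(\frac{n}{t/\delta - 1})$ total additional space, while the stated per-point bound is immediate.
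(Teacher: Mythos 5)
Your proposal is correct and follows exactly the paper's route: the paper states this corollary as an immediate consequence of Observation~\ref{lemma:cone} (constant number of cones per point) with no further proof, and your only addition is the routine check that each cone has an $O(1)$-size representation, which is a harmless and valid elaboration.
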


\begin{lemma}\label{lemma:Spanner}
The output graph $G=(P,E)$ of $\delta$-Greedy algorithm (Algorithm~\ref{alg:deltaGreedy}) is a $t$-spanner for $P$ 
(for $1< \delta < t$).
\end{lemma}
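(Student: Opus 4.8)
The plan is to prove the statement by strong induction on the pairs of points in the order they are processed, i.e., in non-decreasing order of distance (with ties broken as in $L$). Since edges are only ever added to $G$, it suffices to show that immediately after the pair $(p,q)$ is handled, $G$ already contains a $t$-spanning path between $p$ and $q$; such a path then survives into the final graph. The induction hypothesis is that every pair strictly preceding $(p,q)$ in the processing order is connected by a $t$-spanning path in the final graph. For the base case, the closest pair is processed first with empty cone collections and an empty edge set, so the algorithm reaches the third scenario, inserts the direct edge, and trivially produces a $t$-spanning path.

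For the inductive step I would distinguish the situations the algorithm can be in when it reaches $(p,q)$. If $p \notin C_q$ and $q \notin C_p$, the algorithm runs a shortest path query and either (i) finds $d > \delta$ and inserts the edge $(p,q)$, giving the direct $t$-spanning path, or (ii) finds $d \le \delta$, in which case $G$ already contains a path of length $d|pq| \le \delta|pq| < t|pq|$; both give a $t$-spanning path. The interesting case is $q \in C_p$ (the case $p \in C_q$ is symmetric), where the algorithm does nothing and we must exhibit the path indirectly. Here $q$ lies in some cone $c_p(2\theta,s) \in C_p$ that was created when the earlier pair $(p,s)$, with $|ps| \le |pq|$, was processed.

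The heart of the argument is to record the invariant produced at that earlier step. Let $d^*$ be the value used to fix that cone's angle, so $d^* = 1$ if the edge $(p,s)$ was added and $d^* = d_s \le \delta$ otherwise; in either branch $G$ contains a path from $p$ to $s$ of length at most $d^*|ps|$, we have $1 \le d^* \le \delta \le t$, and the half-angle satisfies $\frac{1}{\cos\theta - \sin\theta} = \frac{t}{d^*}$. Because $q$ lies in the cone, $\angle spq \le \theta < \frac{\pi}{4}$, and since $x \mapsto \frac{1}{\cos x - \sin x}$ is increasing on $[0,\frac{\pi}{4})$ we obtain $\frac{1}{\cos(\angle spq) - \sin(\angle spq)} \le \frac{t}{d^*}$. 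I then apply Lemma~\ref{lemma:theta} with the roles $p,r,q \mapsto p,s,q$ and with $d^*$ playing the role of $\delta$, which yields $d^*|ps| + t|sq| \le t|pq|$.

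From this inequality I extract two consequences. First, since $d^*|ps| > 0$, it forces $|sq| < |pq|$, so the pair $(s,q)$ is processed strictly before $(p,q)$ and the induction hypothesis supplies a $t$-spanning path from $s$ to $q$ of length at most $t|sq|$ in the final graph. Second, concatenating the $p$–$s$ path (length $\le d^*|ps|$) with this $s$–$q$ path gives a walk from $p$ to $q$ of total length at most $d^*|ps| + t|sq| \le t|pq|$, hence a $t$-spanning path, which closes the induction. The step I expect to be the main obstacle is setting up this cone invariant precisely: verifying that in both branches (edge added or not) the single quantity $d^*$ simultaneously fixes the cone angle, bounds the length of an already-present $p$–$s$ path, and satisfies $1 \le d^* \le t$, and then correctly matching the angle hypothesis of Lemma~\ref{lemma:theta} through the monotonicity of $\frac{1}{\cos\theta - \sin\theta}$. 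Once these are in place, the remainder is bookkeeping about the processing order.
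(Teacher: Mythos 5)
Your proof is correct and follows essentially the same route as the paper: induction on the rank of $|pq|$, with the cone case resolved by applying Lemma~\ref{lemma:theta} to the earlier pair that created the cone containing $q$. The only cosmetic differences are that you merge the paper's two scenarios ($d > \delta$ and $d \le \delta$) into one via the parameter $d^*$, and that you deduce $|sq| < |pq|$ algebraically from the lemma's conclusion rather than from the cone geometry.
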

\begin{proof}
Let $G=(P,E)$ be the output graph of the $\delta$-Greedy algorithm. 
To prove that $G$ is a $t$-spanner for $P$ we show that for every pair $(p,q) \in P$, 
there exists a $t$-spanning path between them in $G$. 
We prove the above statement by induction on the rank of the distance $|pq|$, 
i.e., the place of $(p,q)$ in a non-decreasing distances order of all pairs of points in $P$.

\noindent
\textbf{Base case:} Let $(p, q) $ be the first pair in the ordered list (i.e., the closest pair). 
The edge $(p,q)$ is added to $E$ during the first iteration of the loop in step~\ref{StepAddingEdges} of 
Algorithm~\ref{alg:deltaGreedy}, and thus there is a $t$-spanning path between $p$ and $q$ in $G$.

\noindent
\textbf{Induction hypothesis:} 
For every pair $(r,s) $ that appears before the pair $(p,q)$ in the ordered list, 
there is a $t$-spanning path between $r$ and $s$ in $G$.

\noindent
\textbf{The inductive step:} Consider the pair $(p,q)$. We prove that there is a $t$-spanning path between $p$ and $q$ in $G$. 
If $p \notin C_q$ and $q \notin C_p$, we check whether there is a $\delta$-spanning path in $G$ between $p$ and $q$. 
If there is a path which length is at most $\delta |pq|$, 
then  $ \delta|pq| \leq t|pq|$, meaning there is a $t$-spanning path  between $p$ and $q$ in $G$. 
If there is no path of length of at most $\delta|pq|$, we add the edge $( p,q)$ to $G$, which forms a $t$-spanning path. 

Consider that $p \in C_q$ or $q \in C_p$, and assume w.l.o.g. that $q \in C_p$. 
Let $(p,r)$ be the edge handled in Step~\ref{alg:edgeIteration} in Algorithm~\ref{alg:deltaGreedy} 
when the cone containing $q$ has been added to $C_p$ (Step~\ref{alg:addConesP} in Algorithm~\ref{alg:deltaGreedy}). 
Notice that $|pr| \leq |pq|$.
Step~\ref{Alg:shortestPath} of Algorithm~\ref{alg:deltaGreedy} has computed the value $d$ for the pair $(p,r)$.
In the algorithm there are two scenarios depending on the value of $d$.

The first scenario is when $d > \delta$, then the algorithm has added the edge $(p,r)$ to $G$ and a cone $c_p (\theta,r)$ to $C_p$, where  
$\theta = 2(\frac{\Pi}{4} - \arcsin(\frac{1}{\sqrt 2 \cdot t}))$. 
Thus, the angle between $(p,q)$ and $(p, r)$  is less than $\theta /2$.
Hence, $|rq| < |pq|$ and by the induction hypothesis there is a $t$-spanning path between $r$ and $q$.
Consider the shortest path between $p$ and $q$ that goes through the edge $(p,r)$. The length of this path is at most $|pr| +  t|rq|$.
By Lemma~\ref{lemma:theta}, we have $|pr|+ t|rq| \leq \delta |pr|+ t|rq| \leq t|pq|$ for $\delta = 1$.
Therefore, we have a $t$-spanning path between $p$ and $q$. 
 
The second scenario is when $d \leq \delta$, then the algorithm has added a cone $c_p (\theta,r)$ to $C_p$, where  
$\theta = 2(\frac{\Pi}{4} - \arcsin(\frac{d}{\sqrt 2 \cdot t}))$. 
Thus, the angle between $(p,q)$ and $(p, r)$  is less than $\theta / 2$.
Hence, $|rq| < |pq|$ and by the induction hypothesis there is a $t$-spanning path between $r$ and $q$.
Consider the shortest path between $p$ and $q$ that goes through $r$. The length of this path is at most 
$d|pr| +  t|rq|$.
By Lemma~\ref{lemma:theta}, 
we have $d|pr|+ t|rq| \leq  t|pq|$.
Therefore, we have a t-spanning path between $p$ and $q$. 
\end{proof}


\begin{theorem}
The $\delta$-Greedy algorithm computes a $t$-spanner for a set of points $P$ with the same properties as the 
Path-Greedy $t$-spanner, such as degree and weight, in $O( (\frac{n}{t/\delta -1})^2 \log n )$ time.  
\end{theorem}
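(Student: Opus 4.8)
The plan is to assemble three ingredients, two of which are already in hand. The $t$-spanner property is exactly Lemma~\ref{lemma:Spanner}, and the time bound $O((\frac{n}{t/\delta-1})^2\log n)$ is the running-time analysis established above; so the only genuinely new content is that the edge set $E$ has bounded degree and weight $O(wt(MST(P)))$, matching Path-Greedy. I would derive both from the structural fact that $\delta$-Greedy is, at its core, a greedy process: whenever an edge $(p,q)$ is inserted (Step~\ref{StepAddingEdges}), the value $d$ satisfied $d>\delta$, i.e.\ the graph built so far contains no path of length at most $\delta|pq|$ between $p$ and $q$. Since pairs are processed in non-decreasing distance order, at that moment every edge already in $G$ has length at most $|pq|$.

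For the degree bound I would argue directly from the cones. When $(p,q)$ is added, a cone $c_p(2\theta,q)$ of half-angle $\theta=\frac{\Pi}{4}-\arcsin(\frac{1}{\sqrt2\,t})$ is placed in $C_p$. If $(p,q')$ with $|pq|\le|pq'|$ is a second edge incident to $p$, then at the time $(p,q')$ is processed it must satisfy $q'\notin C_p$, so $q'$ lies outside $c_p(2\theta,q)$ and hence $\angle qpq'>\theta$. Thus any two edges incident to a common vertex are separated by an angle exceeding $\theta$, giving degree at most $2\pi/\theta=O(1/\theta)$, a constant for every fixed $t>1$ --- the same qualitative bounded-degree property as Path-Greedy.

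For the weight I would show that $E$ satisfies a $(\delta,t)$-leapfrog property and then invoke the known result that any edge set with this property (for constants larger than $1$) has weight $O(wt(MST(P)))$~\cite{DasHN93,GiriSmid07}. Concretely, given distinct edges $(p_1,q_1),\dots,(p_m,q_m)$ with $(p_1,q_1)$ the longest, I would consider the instant $(p_1,q_1)$ was inserted. Each shorter edge $(p_i,q_i)$ is already present, and --- because the $t$-paths produced in the proof of Lemma~\ref{lemma:Spanner} are built recursively from strictly shorter edges --- every pair at distance below $|p_1q_1|$ is already $t$-spanned using edges of length less than $|p_1q_1|$. Chaining the edges $(p_i,q_i)$ with $t$-paths realizing the connecting distances $|q_1p_2|,\dots,|q_mp_1|$ yields a $q_1$-to-$p_1$ route in the current graph of length at most $\sum_{i=2}^m|p_iq_i|+t(\sum_{i=1}^{m-1}|q_ip_{i+1}|+|q_mp_1|)$; since no $\delta$-path existed, this quantity exceeds $\delta|p_1q_1|$, which is exactly the leapfrog inequality.

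The main obstacle is making the leapfrog argument airtight: the recursive $t$-paths only exist when the relevant connecting distances are smaller than $|p_1q_1|$, so I would split into the case where all these distances are small (deriving the inequality from the greedy ``no $\delta$-path'' guarantee as above) and the case where some distance is large (in which the right-hand side already dominates $\delta|p_1q_1|$ and the inequality is immediate). A secondary point worth checking is that the cones cause $\delta$-Greedy to skip some pairs without inserting edges; but skipping only removes candidate edges, so it can neither create a short detour nor increase the degree, and therefore leaves both the leapfrog property and the cone-packing bound intact.
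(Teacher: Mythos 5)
Your proposal is correct and follows essentially the same route as the paper's (very terse) proof: the $t$-spanner property and the time bound are imported from Lemma~\ref{lemma:Spanner} and the running-time observation, constant degree comes from the empty-cone/angular-separation geometry, and the weight bound comes from the leapfrog property combined with the known leapfrog theorem. In fact, your write-up supplies exactly what the paper only asserts in one sentence each --- the cone-packing argument for the degree and the chaining argument establishing the $(\delta,t)$-leapfrog property, including the necessary case split when a connecting distance is at least $|p_1q_1|$ --- so it is a faithful, more detailed rendering of the paper's approach rather than a different one.
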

\begin{proof}
Clearly, the degree of the $\delta$-Greedy is at most the degree of the Path-Greedy $\delta$-spanner.
The edges of the $\delta$-Greedy spanner satisfy the $\delta$-leap frog property, thus, the weight of 
the $\delta$-Greedy is as  Path-Greedy $t$-spanner. Hence, we can pick $\delta$ close to $t$, such that 
we will have the required bounds.   
\end{proof}

\begin{lemma}\label{lemma:equal}
 If $t=\delta$,  the result of the $\delta$-Greedy algorithm is identical to the result of the Path-Greedy algorithm.
\end{lemma}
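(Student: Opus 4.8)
The plan is to run both algorithms on the same distance-sorted list of pairs (with identical tie-breaking) and prove by induction on the rank of a pair that, after that pair has been processed, the two algorithms hold exactly the same edge set. Since Path-Greedy (Algorithm~\ref{alg:pathGreedy}) inserts the edge $(p,q)$ precisely when the shortest-path length $\pi$ in the current graph exceeds $t|pq|$, and since for $\delta=t$ the quantity $d=\pi/|pq|$ computed in Step~\ref{Alg:shortestPath} of Algorithm~\ref{alg:deltaGreedy} satisfies $d>\delta \iff \pi>t|pq|$, the two algorithms make the same edge decision whenever they both actually inspect the pair. The only way the outputs could diverge is a pair $(p,q)$ that $\delta$-Greedy skips — because $q\in C_p$ or $p\in C_q$ — but for which Path-Greedy would insert an edge. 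Ruling this out is the crux.

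To rule it out I would strengthen the induction hypothesis to assert not only that the edge sets coincide after each pair, but also that at the moment a pair $(p,q)$ is processed the current common graph already contains a $t$-spanning path between every pair of strictly smaller rank. Then, in the case $q\in C_p$ (the case $p\in C_q$ being symmetric), I would reuse the geometric argument of Lemma~\ref{lemma:Spanner} verbatim: the cone containing $q$ was created while processing some pair $(p,r)$ with $|pr|\le|pq|$, the angle $\angle rpq$ is at most the cone half-angle, and $|rq|<|pq|$. Applying Lemma~\ref{lemma:theta} — with parameter $1$ in the edge-added sub-case, where $d$ is reset to $1$ and $\frac{1}{\cos\theta-\sin\theta}=t$, and with parameter $d$ in the other sub-case, where $1\le d\le t$ since any path has length at least $|pq|$ and $d\le\delta=t$ — yields a path through $r$ of length at most $t|pq|$. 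Hence Path-Greedy's query returns $\pi\le t|pq|$ and adds no edge, so the edge sets stay identical and the strengthened hypothesis is preserved.

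The step I expect to be the main obstacle is arguing that the path furnished by the cone actually lives in the intermediate graph and not merely in the final graph. This requires checking that every ingredient of the path corresponds to a pair of strictly smaller rank than $(p,q)$: the $p$–$r$ connection (a direct edge in the edge-added sub-case, or a path of length at most $d|pr|$ otherwise) was fixed when $(p,r)$ was processed, and $(p,r)$ precedes $(p,q)$; the $r$–$q$ portion is supplied by the strengthened hypothesis because $|rq|<|pq|$. Once well-foundedness of the rank induction is established, concatenating these two pieces produces the required $t$-spanning path in the current graph, closing both parts of the induction and showing that the two edge sets — and hence the two output spanners — are identical.
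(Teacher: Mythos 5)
Your proposal is correct and follows essentially the same route as the paper's proof: the paper argues by taking the first pair on which the two outputs differ (a minimal-counterexample phrasing of your rank induction), reduces to the cone-skip case, and applies Lemma~\ref{lemma:theta} to the cone-creating pair $(p,r)$ exactly as you do. Your explicit strengthening of the induction hypothesis (that the common intermediate graph already $t$-spans all pairs of smaller rank) just makes rigorous a step the paper leaves implicit when it asserts the $p$--$r$ and $r$--$q$ subpaths exist in the current graph.
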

\begin{proof}
Assume towards contradiction that for $t=\delta$ the resulting graph of the $\delta$-Greedy algorithm, 
denoted as $G=(P,E)$, differs from the result of the Path-Greedy  algorithm, denoted as $G'=(P,E')$. 
Assuming the same order of the sorted edges,
let $(p,q)$ be the first edge that is different in $G$ and $G'$.
Notice that $\delta$-Greedy algorithm decides to add the edge $(p,q)$ to $G$ when there is no $t$-spanning path between $p$ and $q$ in $G$.
Since until handling the edge $(p,q)$ the graphs $G$ and $G'$ are identical, 
the Path-Greedy algorithm also decides to add the edge $(p,q)$ to $G'$. 
Therefore, the only case we need to consider is $(p,q) \in E'$ and $(p,q) \notin E$.  
The $\delta$-Greedy algorithm does not add an edge $(p,q)$ to $G$ in two scenarios:
\begin{itemize}
	\item  there is a $t$-spanning path between $p$ and $q$ in the current graph $G$ \ -- \	
	which contradicts that the Path-Greedy algorithm adds the edge $(p,q)$ to $G'$;
	\item  $p \in C_q$ or $q \in C_p$ \ -- \ 
	 the $\delta$-Greedy algorithm does not perform a shortest path query between $p$ and $q$.
	Assume w.l.o.g., $q \in C_p$, and let $(p,r)$ be the edge considered in Step~\ref{alg:edgeIteration} 
	in Algorithm~\ref{alg:deltaGreedy} 
when the cone containing $q$ has been added to $C_p$.
The angle of the added cone is $\theta = \frac{\Pi}{2} - 2\arcsin(\frac{d}{\sqrt 2 \cdot t}) $, where $d$ 
is the length of the shortest path between $p$ and $r$ divided $|pr|$. 
Thus, we have $ |pr| \leq |pq|$ and $\frac {1} {\cos \alpha - \sin \alpha} \leq \frac{t}{d}$, where $\alpha \leq \theta$ is the angle 
$\angle rpq $.
Then, by Lemma~\ref{lemma:theta},  $\delta|pr|+ t|rq| \leq t|pq|$, 
and since there is a path from $p$ to $r$ of length at most $\delta |pr|$, 
we have that there is $t$-spanning path between $p$ and $q$ in the current graph.  
This is in contradiction to the assumption that the Path-Greedy algorithm adds the edge $(p,q)$ to $E'$.
\end{itemize}
\end{proof}

\section{$\delta$-Greedy in Expected $O(n \log n)$ Time for Random Set}
\label{subSec:calc-nlogn} 
In this section we show how a small modification in the implementation improves the running time of the $\delta$-Greedy algorithm.
This improvement yields an expected $O(n \log n)$ time for random point sets.
The first modification is to run the shortest path query between points $p$ to $q$ up to $\delta |pq|$.
That is, running Dijkstra’s shortest path algorithm with source $p$ and terminating as soon as the minimum key in the priority queue
is larger than $\delta |pq|$.  


Let $P$ be a set of $n$ points in the plane uniformly distributed in a unit square.
To prove that $\delta$-Greedy algorithm computes a spanner for $P$ in expected $O(n \log n)$ time, 
we need to show that: 

\begin{itemize}
  \item each point runs a constant number of shortest path queries \ -- \ follows from Lemma~\ref{lemma:shortest-path};
  \item the expected number of points visited in each query is constant \ -- \
   The fact that the points are randomly chosen uniformly in the unit square implies that the expected number of points at distance of at most 
	$r$ from point $p$ is $\Theta(r^2 \cdot n)$. 
A shortest path query from a point $p$ to a point $q$ terminates as soon as the minimum key in the priority 
queue exceeds $\delta |pq|$, thus, it is expected to visit $O(n \cdot (\delta|pq|)^2)$ points.

By Lemma~\ref{lemma:shortest-path} the number of shortest path queries performed by the algorithm for a point $p$ is 
$O(\frac{1}{t/\delta -1})$. Each such query defines a cone with apex at $p$ of angle $\Omega(t/\delta -1)$, 
such that no other shortest path query from $p$ will be performed to a point in this cone.
By picking $k=\frac{1}{t/\delta -1}$ and $r= \frac{k}{\sqrt n}$, we have that the expected number of points around each point in a distance of $r$ is $\Theta (k^2) = \Theta ( \frac{1}{(t/\delta -1)^2} )$.

Assume we partition the plane into $k$ equal angle cones with apex at point $p$.  
The probability that there exists a cone that does not contain a point from the set of points of distance 
$\frac{k}{\sqrt n}$ is at most $k \cdot (1- \frac{1}{k})^{k^2}$. 
Let $Q$ be the set of points that $p$ computed a shortest path query to, and let $q \in Q$ be the farthest point in $Q$ from $p$. 
Then, the expected Euclidean distance between $p$ and $q$ is less than $\frac{k}{\sqrt n}$.
Thus, the expected number of points visited by the entire set of shortest path queries from a point is  
$O(\frac{\delta^2 k^2}{t/\delta -1}) = O(\frac{\delta^2}{(t - \delta)^3})$;
	
  \item the next pair to be processed can be obtained in expected $O(\log n)$ time 
without sorting all pairs of distinct points \ -- \ Even-though this is quite straight forward, for completeness we give a short description how this can be done.  
Divide the unit square to $n \times n$ grid cells  of side length $1/n$.
A hash table of size $3n$ is initialized, and for each non-empty grid cell (at most $n$ such cells) we map the points in it to the hash table. 
In addition, we maintain a minimum heap $H_p$ for each point  $p \in P$ (initially empty), 
and one main minimum heap $H$ that contains the top element of each $H_p$. 
Each  heap $H_p$ contains a subset of the pairs that include $p$. 

For each point $p \in P$, all the cells of distance at most $\frac{k}{\sqrt n}$ from $p$ are scanned (using the hash table)
to find all the points in these cells, where $k$ is a parameter that we fix later.  
All the points found in these cells are added to $H_p$ according to their Euclidean distance from $p$.
  
The heap $H$ holds the relevant pairs in an increasing order, therefore the pairs are extracted from the main heap $H$. 
After extracting the minimum pair in $H$ that belongs to a point $p$,  we add to $H$ the next minimum in $H_p$.
To insure the correctness of the heaps, when needed we increase the distance to the scanned cells.
Observe that there may be a pair $(p,q)$ such that $|pq| < |rw|$, where the pair $(r,w)$ is the top pair in $H$. 
This can occur only when the pair $(p,q)$ has not been added to $H_p$ nor $H_q$, and this happens when 
$p \in C_q$ or $q \in C_p$. However, in this case we do not need to consider the pair $(p,q)$.

Notice that the only cells that are not contained in $C_p$ are scanned to add more pairs to $H_p$. 
Thus, points that are in $C_p$ are ignored.

\end{itemize}

Therefore, the total expected running time of the algorithm is $O( \frac{\delta^2}{(t - \delta)^3} n \log n )$. 
Since both $t$ and $t/\delta $ are constants bigger than one, the expected running time of the $\delta$-Greedy algorithm is  $O( n \log n )$.

A very nice outcome of $\delta$-Greedy algorithm and its analysis can be seen when  $\delta$ is  equal to $t$.
Assume that $\delta$-Greedy algorithm (for $\delta = t$) has computed a shortest path query for two points $p$ and $q$ and the length of the received path is $d|pq|$.  
If the probability that  $ t/d > 1 +\varepsilon $ is low 
(e.g, less than 1/2), 
for some constant $\varepsilon >0$,  then $\delta$-Greedy algorithm computes the Path-Greedy 
spanner with linear number of shortest path queries. 
Thus $\delta$-Greedy algorithm computes the Path-Greedy spanner for a point set uniformly distributed in a square in expected $O(n \log n)$ time. 

Not surprisingly our experiments have shown that this probability is indeed low (less than 1/100), since most of the shortest path queries are performed  on pairs of points 
placed close to each other (with respect to Euclidean distance), and thus with a high probability their shortest path contains a constant number of points. 
Moreover, it seems  that for a ``real-life" input this probably  is low. 
Thus, there is a very simple algorithm to compute the Path-Greedy spanner 
in expected $O(n^2 \log n)$ time for real-life inputs, based on the $\delta$-Greedy algorithm

For real-life input we mean that our analysis suggests that in the current computers precision (Memory) one cannot 
create an instance of points set with more than 1000 points, where the Path-Greedy spanner based on the $\delta$-Greedy algorithm has more than $O(n^2 \log n)$ constructing time.

\section{Experimental Results}\label{sec:res}
In this section we discuss the experimental results by considering the properties  of the graphs generated by different algorithms and the number of shortest path queries performed during these algorithms. 
We have implemented the Path-Greedy, $\delta$-Greedy, Gap-Greedy, $\theta$-Graph, Path-Greedy on $\theta$-Graph algorithms.
The Path-Greedy on $\theta$-graph $t$-spanner, first computes a $\theta$-graph $t'$-spanner, where $t'< t$, and then 
runs the Path-Greedy $t/t'$-spanner on this $t'$-spanner. 
The shortest path queries criteria is used for an absolute running time comparison that  is independent of the actual implementation.  
The known theoretical bounds for the algorithms can be found in Table~\ref{table:bounds}. 

 \begin{table}[b]
    \begin{tabular}{| l | l | l | l | l | l | l |}
    \hline
     \textbf{Algorithm} &  \textbf{Edges} & $\frac{\textbf{Weight}}{wt(MST)}$ &  \textbf{Degree} &  \textbf{Time} \\ \hline
    Path-Greedy & $O(\frac{n}{t-1})$ & $O(1)$ & $O(\frac{1}{t-1})$ & $O(n^3 \log n)$ \\ \hline
		Gap-Greedy & $O(\frac{n}{t-1})$ & $O(\log n)$ &  $O(\frac{1}{t-1})$ & $O(n \log^2n)$ \\ \hline
		$\theta$-Graph & $O(\frac{n}{\theta})$ & $O(n)$ & $O(n)$ &  $O(\frac{n}{\theta}\log n)$ \\ \hline
		$\delta$-Greedy & $O(\frac{n}{t/\delta -1})$ &  $O(1)$ & $O(\frac{1}{t/\delta -1})$ &  $O(\frac{1}{t/\delta -1}\cdot n^2\log n)$  \\ \hline
    \end{tabular}
	 \caption{Theoretical bounds of different $t$-spanner algorithms}
	\label{table:bounds} 
\end{table}

The experiments were performed on a set of $8000$ points, with different values of the parameter $\delta$ (between 1 and $t$).
We have chosen to present the parameter $\delta$ for the values 
$t, t^{0.9} $ and $\sqrt t $. This values do not 
have special properties, they where chosen arbitrary to present the behavior of the spanner.

To avoid the effect of specific instances, we have run the algorithms several times and taken the average of the results. 
However, in all the cases the difference between the values is negligible. 
Table~\ref{table:res-1.1}--\ref{table:res-2} show the results of our experiments for 
different values of $t$ and $\delta$. The columns of the weight (divided by $wt(MST)$) and the degree are rounded to integers, 
and the columns of the edges are rounded to one digit after the decimal point (in $k$).

\begin{table}
  \begin{tabular}{| l | l | l | l | l | l |}
		    \hline
   \textbf{Algorithm} & \textbf{$\delta$} &  \textbf{Edges} (in K) &  \textbf{Weight} &  \textbf{Degree} &  \textbf{Shortest path}   \\ 
	              &   &       &  $\overline{wt(MST)}$   &     &  \textbf{queries} (in K)     \\ \hline      
    Path-Greedy	& - &	35.6	&	10	&	17	&	$31996$ \\ \hline
			$\delta$-Greedy &  1.1 &	35.6	&	10	&	17	&	254 \\ \hline
				$\delta$-Greedy &  $1.0896$ & 37.8	&	12	&	18	&	242 \\ \hline
		$\delta$-Greedy &  $1.048$ & 51.6	& 19	&	23	&	204 \\ \hline
		$\theta$-Graph &  - & 376.6	&	454 &	149 & - \\ \hline
		Greedy on $\theta$-Graph & $1.0896$ &37.8	&	12	&	18	&	3005 \\ \hline
		Greedy on $\theta$-Graph & $1.048$ &52	&	19	&	23	&	693 \\ \hline
		Gap-Greedy & - &51.6	&	19	&	23	&	326 \\ \hline
  	\end{tabular}
  \caption{Comparison between several $t$-spanner algorithms for $t=1.1$}
	\label{table:res-1.1} 
\end{table}

\begin{table}\label{table:res-1.5} 
    \begin{tabular}{| l | l | l | l | l | l |}
    \hline
    \textbf{Algorithm} & \textbf{$\delta$} &  \textbf{Edges} (in K) &  \textbf{Weight} &  \textbf{Degree} &  \textbf{Shortest path}   \\ 
	              &   &       &  $\overline{wt(MST)}$   &     &  \textbf{queries} (in K)     \\ \hline      
		Path-Greedy &  - & $15.1$	&	$3$	&	7	& $31996$ \\ \hline
		$\delta$-Greedy &  $1.5$ & $15.1$	&	$3$	&	7	&	$82$ \\ \hline
			$\delta$-Greedy & $1.44$	&	$16$	&	$3$&	8	&	$77$ \\ \hline
		$\delta$-Greedy & $1.224$ & $22.5$	&	$5$	&11	&	$63$ \\ \hline
		$\theta$-Graph &  - & $118.6$	&	$76$	&	53 & - \\ \hline
			Greedy on $\theta$-Graph &  $1.44$ & $16$	&	$3$	&	8	&	$817$ \\ \hline
		Greedy on $\theta$-Graph &  $1.224$ & $22.5$	&	$6$	&	11	&	$198$\\ \hline
		Gap-Greedy &  - & $22.6$	&	$5$	&	11 &	$95$ \\ \hline
    \end{tabular}
		\caption{Comparison between several $t$-spanner algorithms for $t=1.5$}
		\end{table}

 \begin{table}
    \begin{tabular}{ | l | l | l | l | l | l |}
    \hline
    \textbf{Algorithm} & \textbf{$\delta$} &  \textbf{Edges} (in K) &  \textbf{Weight} &  \textbf{Degree} &  \textbf{Shortest path}   \\ 
	              &   &       &  $\overline{wt(MST)}$   &     &  \textbf{queries} (in K)     \\ \hline      
		Path-Greedy &  - & 11.4	&	2	&	5 & $31996$ \\ \hline
		$\delta$-Greedy &  $2$ &11.4&	2	&	5	&	55 \\ \hline
		$\delta$-Greedy &  $1.866$ &11.9&	2	&	5	&	52 \\ \hline
		$\delta$-Greedy &  $1.414$ & 16.3	&	3	&	8	&	44 \\ \hline
		$\theta$-Graph &  - & 85.3	&	48	&	42 & - \\ \hline
		Greedy on $\theta$-Graph & $1.866$ &11.9	&	3	&	6	&	493\\ \hline
		Greedy on $\theta$-Graph & $1.414$ &16.5	&	3	&	8	&	129 \\ \hline
		Gap-Greedy & - & 16	&	3	&	8	&	63 \\ \hline
    \end{tabular}
		\caption{Comparison between several $t$-spanner algorithms for $t=2$}
		\label{table:res-2} 
\end{table}

\subsection{Implementation details}\label{subSec:algo-details}
All the algorithms mentioned above were implemented in Java using JGraphT and JGraph libraries.
The experiments were performed on an Intel ® Xeon® CPU E5-2680 v2 $@$ 2.80 GHz (2 processors) and 128 GB RAM on Windows Server 2012 Standard OS using ECJ for compilation. The sample point sets were generated by Java.util.Random pseudo random number generator.

\subsection{Results analysis}\label{subSec:res-analysis}
The experiments indicate that the $\delta$-Greedy algorithm achieves good results in practice as expected. 
The outcome of the $\delta$-Greedy algorithm for all values of $\delta$, that have been checked, is roughly the same as the results of the Path-Greedy algorithm for all parameters. Compared to other algorithms, 
the $\delta$-Greedy graphs are superior to the graphs produced by the $n^2$-Gap algorithm, and are 
as good as Path-Greedy on $\theta$-Graph, with significantly a lower number of shortest path queries. 
The theoretical complexity of the Path-Greedy on $\theta$-Graph is $O(n^2 \log n)$, same as the $\delta$-Greedy algorithm. 
However in practice the $\delta$-Greedy algorithm computes considerably less shortest path queries. 
Hence, the $\delta$-Greedy algorithm has the same results in weight, size and degree as the Path-Greedy on 
$\theta$-Graph algorithm with better running time. 
 
In addition, Farshi and Gudmundsson in~\cite{FarshiG09} have implemented various spanner algorithms
and shown that the Path-Greedy algorithm for $t=1.1$ and for $t=2$ on random graphs are almost 
identical to ours experimental results in weight, 
size and degree.  Moreover, they have shown that Path-Greedy spanner is the highest quality geometric spanner in terms of edge count, degree and weight.
They have presented  the results for $t=1.1$ and for $t=2$ on random point set with 8000 points.
Moreover, they have shown that the $\theta$-Graph spanner achieves in practice the best results after the 
Path-Greedy spanner for all parameters that have been tested (size, weight and degree) comparing to  
other spanners that they have implemented (such as the Approximate-Greedy, the WSPD-spanner, Skip-list and Sink-Spanner). 
Our experiments show that the $\delta$-Greedy spanner achieves better results than the $\theta$-Graph spanner. 
Thus, combining this with the results in~\cite{FarshiG09}, we conclude  that the $\delta$-spanner achieves the 
highest quality geometric spanner  with respect to 
$\theta$-Graph,  Approximate-Greedy, the WSPD-spanner, Skip-list, Sink-Spanner, and Gap-Greedy spanners. 

The experiments reinforce the analysis that picking $\delta$ very close to $t$ (for example $\delta= t^{0.9}$), the results are very close to the Path-Greedy spanner, and the number of the performed shortest paths queries is still small. 
Moreover, the experiments show that the number of shortest path queries is linear while selecting $\delta =t$ and obtaining the $\delta$-Greedy spanner identical to the Path-Greedy $t$-spanner.

The experiments presented in this paper were performed on set of points placed independently at random in a unit square.
However, we conjecture that the $\delta$-Greedy algorithm computes a $t$-spanner in expected $O(n \log n) $ time 
for almost all realistic inputs, that is, the $\delta$-Greedy algorithm computes a $t$-spanner in expected $O(n \log n) $ time  
for point sets that are not deliberately hand-made to cause  a higher number of shortest path queries.

\section{Acknowledgments}\label{sec:ack}
We would like to thank Rachel Saban for implementing the algorithms.



\end{document}